\newtheorem{lemma}{Lemma}
\newcommand{\hil}{\mathcal H}
\newcommand{\scr}{\scriptscriptstyle}
\newcommand{\lsc}[1]{_{\scr #1}}
\newcommand{\lph}[1]{\lambda_{#1}^{(\scriptscriptstyle\Phi_{\mu})}}
\newcommand{\lphf}[1]{\lambda_{#1}^{(\scriptscriptstyle\Phi)}}
\newcommand{\lps}[1]{\lambda_{#1}^{(\scriptscriptstyle\Psi)}}
\newcommand{\cv}[1]{c^{(\scriptscriptstyle #1)}}
\newcommand{\cvd}[1]{c^{(\scriptscriptstyle #1)\dagger}}
\newcommand{\spv}{\sup_{\substack{\scriptscriptstyle\Psi\in V\\\scriptscriptstyle\norm{\Psi}=1}}}
\newcommand{\lsup}{\bar\lambda^{\scriptscriptstyle V}_1}
\begin{document}



\title{Lower bounds on concurrence and negativity from a trace inequality
}

\author{\footnotesize K. V. ANTIPIN,\\
	Physics Department, Lomonosov Moscow State University,\\
        Moscow, Russia,\\
        kv.antipin@physics.msu.ru.
}


\maketitle


\begin{abstract}
For bipartite quantum states we obtain lower bounds on two important entanglement measures, concurrence and negativity, studying the inequalities for the expectation value of a projector on some subspace of the Hilbert space. Several applications, including analysis of stability of entanglement  under various perturbations of a state, are discussed.

\end{abstract}


\section{Introduction}	

Entanglement as a resource is a central concept in quantum information theory. The important question is to tell whether a given quantum composite system state is entangled or separable. One of the first remarkable results in this direction was the positive partial transposition~(PPT) criterion~\cite{Peres} as necessary condition for separability of  bipartite mixed states. This simple but extremely useful observation by Asher Peres has generated further considerable research. It was proved that PPT condition is necessary and sufficient  for separability of $2\otimes2$ and $2\otimes3$ states~\cite{Horod1}. Over time several other necessary or/and sufficient criteria were developed~\cite{Horod1, Terhal_Bell,Horod2, Realign, CCN}, among which entanglement witnesses~\cite{Terhal_Bell, Horod2} and the CCNR criterion~\cite{Realign, CCN} proved to be important tools in detecting entanglement.

At the same time, considerable research was devoted to developing various entanglement measures~\cite{GuRev}. Among the most important measures are the concurrence~\cite{EntFrac} and the negativity~\cite{Zyc,wneg}. Some of the known results connect these entanglement measures  to specific separability criteria via inequalities giving lower bounds on these quantities. For example, the connection between the concurrence of a bipartite $m\otimes n$ state $\rho$ and the PPT and realignment criteria was established in Ref.~\cite{coppt}:
\begin{equation}
	C(\rho )\geq \sqrt{\frac{2}{m(m-1)}}\Big(\max (\Vert \rho ^{T_{A}}\Vert
	,\Vert \mathcal{R}(\rho )\Vert )-1\Big),  \label{pptreal}
\end{equation}
where $\norm{}$ -- trace norm, $\rho ^{T_{A}}$ -- partial transposition of $\rho$ with respect to subsystem $A$, and $\mathcal{R}(\rho )_{ij,kl}=\rho _{ik,jl}$.

Recently~\cite{Bloch} lower bounds on the two entanglement measures were obtained with the use of Bloch repfesentations.

In the present paper we obtain lower bounds on concurrence and negativity studying inequalities for the expectation value of a specific operator in a state described by a given density operator. The operator is chosen to be a projector $\Pi_V$ on some subspace $V$ of the Hilbert space. The main result of this paper is the following inequality
\begin{equation}
    C(\rho\lsc{AB})\,\geqslant\,\max\left(\sqrt{\frac{2}{m(m-1)}}\,\frac{\Tr{\rho\lsc{AB}\,\Pi_V} - \lsup}{\lsup},\,0\right)
\end{equation}
for the concurrence of a bipartite $m\otimes n$ state $\rho\lsc{AB}$ and the inequality
\begin{equation}
    N^{\mathrm{CREN}}(\rho\lsc{AB})\,\geqslant\,\max\left(\frac{\Tr{\rho\lsc{AB}\,\Pi_V} - \lsup}{2\lsup},\,0\right)
\end{equation}
for the convex-roof extended negativity~(defined further). Here $\lsup$ --- the supremum of the largest Schmidt coefficient squared taken over all vector states in the subspace $V$.

The essential tool in the derivation of the bounds is the von Neumann's trace inequality~\cite{Neumann} which relates the trace of a product of two matrices with their singular values. Obtained results generalize some known estimates for the entanglement measures. The underlying separability criterion was investigated earlier in Ref.~\cite{SpecProp}.

\section{Definitions}

Throughout this paper we consider bipartite pure and mixed states.

The \emph{concurrence} of a pure bipartite state $\psi$ is defined as follows:
\begin{equation}\label{purconc}
	    C(\psi) = \sqrt{2\left(1-\Tr{\rho_A^2}\right)},
    \end{equation}
    where $\rho=\dyad{\psi}$ is the density operator of the pure state $\ket{\psi}$, and $\rho_A = \mathrm{Tr}_B\{\rho\}$ -- the reduction of $\rho$ on subsystem $A$.

    Given the Schmidt decomposition $\ket{\psi} = \sum_i\,\sqrt{\lambda_i}\ket{\alpha_i\beta_i}$, the concurrence can be expressed in the following way~\cite{ElToSie2015}:
    \begin{equation}\label{schconc}
	    C(\psi) = 2\,\sqrt{\sum_{i<j}\,\lambda_i\lambda_j}.
    \end{equation}

    A mixed state $\rho$ can be expressed via various  ensemble decompositions of the form
\begin{equation}
	\rho = \sum_a\,p_a\,\dyad{\psi_a}
\end{equation}
    By definition, the concurrence of a mixed state $\rho$ is given by the convex roof construction, the minimum average concurrence taken over all ensemble decompositions of $\rho$:
    \begin{equation}\label{concconv}
	        C(\rho) = \min_{\{(p_a,\,\psi_a)\}}\,\sum_a\,p_a\,C(\psi_a).
	\end{equation}

	The \emph{negativity} of $\rho$ is defined as follows:
	\begin{equation}\label{negdef}
		    N(\rho) = \frac12(\norm{\rho^{T_B}}_1 - 1),
	    \end{equation}
	    where $\rho^{T_B}$ is the partial transpose of $\rho$ with respect to party $B$, and $\norm{A}_1 = \Tr{\sqrt{A^{\dagger}A}}$ is the trace norm of $A$.

	    $N(\psi)$ can also be expressed in terms of the Schmidt coefficients~\cite{ElToSie2015}:
	    \begin{equation}\label{pureneg}
		        N(\psi) = \sum_{i<j}\,\sqrt{\lambda_i\lambda_j}.
		\end{equation}

		From the definition of $N$ it is seen that entanglement of states with a positive partial transpose~(PPT states) is not detected by this measure.

		The \emph{convex-roof extended negativity~\cite{CREN2003}}~(CREN) is given by
		\begin{equation}\label{cren}
			    N^{\mathrm{CREN}}(\rho) = \min_{\{(p_a,\,\psi_a)\}}\,\sum_a\,p_a\,N(\psi_a).
		    \end{equation}
		     
		     The convex roof measures presented in Eqs.~(\ref{concconv}) and (\ref{cren}) detect \emph{all} entangled states, but they are very hard to compute, so lower and upper bounds on these quantities play important role in entanglement theory.

\section{Derivation of bounds}
We proceed to the lower bounds on the concurrence and the convex-roof extended negativity.

Let $\rho\lsc{AB}$ be a bipartite density operator acting on a tensor product of Hilbert spaces $\hil_A\otimes\hil_B$ with $\dim{\hil_A} = m\,\leqslant\dim{\hil_B} = n$. Let 
\begin{equation}\label{anydec}
	\rho_{\scr AB}=\sum_{\mu} q_{\mu} \dyad{\Phi_{\mu}}\lsc{AB},
\end{equation}
be an  ensemble decomposition of $\rho_{\scr AB}$ with ensemble probabilities $q_{\mu}$ and vector states $\ket{\Phi_{\mu}}$.

Let $V$ be an $l$-dimensional subspace  of $\hil_A\otimes\hil_B$ spanned by orthonormal vectors $\{\ket{\Psi_k}\}$, $k=1,\,\ldots,\,l$. Consider the expectation value of $\Pi_V = \sum_{k}\,\dyad{\Psi_k}$, the projector on $V$, in the state $\rho_{\scr AB}$:
\begin{equation}\label{TrProj}
		\Tr{\rho\lsc{AB}\,\Pi_V} = \Tr{\sum_{\mu} q_{\mu} \dyad{\Phi_{\mu}}\,\sum_{k}\,\dyad{\Psi_k}}
		=\sum_{\mu}q_{\mu}\sum_k\absolutevalue{\bra{\Phi_{\mu}}\ket{\Psi_k}}^2.
\end{equation}
Now let us look at the expression $\sum_k\absolutevalue{\bra{\Phi_{\mu}}\ket{\Psi_k}}^2$ on the right-hand side of Eq.~(\ref{TrProj}). Let $\ket{\Psi}$ be some vector state in $V$ with decomposition
\begin{equation}
	    \ket{\Psi} = \sum_k\,c_k\,\ket{\Psi_k},\:\:\sum_k\,\absolutevalue{c_k}^2 = 1.
    \end{equation}
    Vector $\ket{\Phi_{\mu}}$ can be decomposed as
    \begin{equation}
	        \ket{\Phi_{\mu}} = \sum_k\,t_k\ket{\Psi_k}\:+\:\ket{\Phi_{\mu}^{\perp}},\:\:\:\sum_k\,\absolutevalue{t_k}^2\,\leqslant\,1,
	\end{equation}
	where $\ket{\Phi_{\mu}^{\perp}}$ belongs to the orthogonal complement of $V$.

Next, by the Cauchy-Schwarz inequality we have:
\begin{multline}\label{cssup}
    \absolutevalue{\bra{\Phi_{\mu}}\ket{\Psi}}^2 = \absolutevalue{\sum_k\,c_k\,\bra{\Phi_{\mu}}\ket{\Psi_k}}^2\,\leqslant\\
    \leqslant\,\sum_j\,\absolutevalue{c_j}^2\,\sum_k\absolutevalue{\bra{\Phi_{\mu}}\ket{\Psi_k}}^2 = \sum_k\absolutevalue{\bra{\Phi_{\mu}}\ket{\Psi_k}}^2.
\end{multline}
Let us assume\footnote{otherwise, inequality in Eq.~(\ref{cssup}) is trivial} that $\sum_k\,\absolutevalue{t_k}^2 \ne\,0$ and let $\alpha$ denote a number such that $\absolutevalue{\alpha}^2\sum_k\,\absolutevalue{t_k}^2 = 1$. If $\ket{\Phi_{\mu}}$ is fixed, the upper bound in Eq.~(\ref{cssup}) is achieved and the inequality becomes equality when a specific vector $\ket{\Psi}$ is chosen:
\begin{equation}
    \ket{\Psi} = \sum_k\,\alpha t_k\,\ket{\Psi_k},
\end{equation}
i. e., $\ket{\Psi}$ is proportional to the projection of vector $\ket{\Phi_{\mu}}$ onto subspace $V$. Consequently, we can write:
\begin{equation}
    \sum_k\absolutevalue{\bra{\Phi_{\mu}}\ket{\Psi_k}}^2 = \max_{\substack{\scriptscriptstyle\Psi\in V\\\scriptscriptstyle\norm{\Psi}=1}}\,\absolutevalue{\bra{\Phi_{\mu}}\ket{\Psi}}^2,
\end{equation}
and Eq.~(\ref{TrProj}) transforms to
\begin{equation}\label{Trmax}
    \Tr{\rho\lsc{AB}\,\Pi_V} = \sum_{\mu}\,q_{\mu}\,\max_{\substack{\scriptscriptstyle\Psi\in V\\\scriptscriptstyle\norm{\Psi}=1}}\,\absolutevalue{\bra{\Phi_{\mu}}\ket{\Psi}}^2.
\end{equation}

Let $c^{(\scriptscriptstyle\Phi_{\mu})}$ and $ c^{(\scriptscriptstyle\Psi)}$ denote the matrices of the vectors $\ket{\Phi_{\mu}}$ and $\ket{\Psi}$ respectively, given in the computational basis of $\hil_A\otimes\hil_B$:
\begin{equation}
	  \label{PureDec}
	  \ket{\Psi} = \sum_{i, j} c^{(\scriptscriptstyle\Psi)}_{ij}\ket{i}_{\scr A}\otimes\ket{j}_{\scr B}, \:\:\: \ket{\Phi_{\mu}} = \sum_{i, j}  c^{(\scriptscriptstyle\Phi_{\mu})}_{ij}\ket{i}_{\scr A}\otimes\ket{j}_{\scr B}.
    \end{equation}
    Eq.~(\ref{Trmax}) can be rewritten as:
    \begin{equation}
    	\label{Trmatr}
    \Tr{\rho\lsc{AB}\,\Pi_V} = \sum_{\mu}\,q_{\mu}\,\max_{\substack{\scriptscriptstyle\Psi\in V\\\scriptscriptstyle\norm{\Psi}=1}}\,\absolutevalue{\Tr{c^{(\scriptscriptstyle\Phi_{\mu})\dagger}\,c^{(\scriptscriptstyle\Psi)}}}^2.
    \end{equation}
We can give an upper bound on the trace on the right-hand side of Eq.~(\ref{Trmatr}) with the use of the property~\cite{Neumann,MA,Bhatia} that is often referred to as the von Neumann's trace inequality:
\begin{equation}
  \label{TrSing}\
  \lvert\mathrm{Tr}\{A^{\dagger}B\}\rvert \leqslant \sum_{i=1}^{q} \sigma_i (A) \sigma_i (B),
\end{equation}
where $A, B$ -- complex $m\times n$ matrices, $q = \mathrm{min}\{m,n\}$, $\sigma_i(A), \,\sigma_i(B)$ -- singular values of A and B arranged in non-increasing order: $\sigma_1(A)\geqslant\sigma_2(A)\geqslant\ldots\geqslant\sigma_q(A)$.

Let us denote as $\sqrt{\lph i}$ and $\sqrt{\lps i}$ the Schmidt coefficients of the vectors $\ket{\Phi_{\mu}}$ and $\ket{\Psi}$ respectively, arranged in non-increasing order: $\sqrt{\lph 1}\,\geqslant\,\sqrt{\lph 2}\,\geqslant\,\ldots\,\geqslant\sqrt{\lph m}\:\:$; $\sqrt{\lps 1}\,\geqslant\,\sqrt{\lps 2}\,\geqslant\,\ldots\,\geqslant\sqrt{\lps m}$. By definition of the Schmidt coefficients, the relation with the singular values of $\cv{\Phi_{\mu}}$ and $\cv{\Psi}$ is as follows:
\begin{equation}\label{sing}
	\sqrt{\lph i} = \sigma_i(\cv{\Phi_{\mu}}),\: \sqrt{\lps i} = \sigma_i(\cv{\Psi}).
\end{equation}
Making use of Eqs.~(\ref{TrSing}), (\ref{sing}), we obtain the following chain of inequalities:
\begin{multline}\label{TrBound}
	\absolutevalue{\Tr{\cvd{\Phi_{\mu}}\cv{\Psi}}}^2\,\leqslant\,\left(\sum_i\,\sqrt{\lps i\lph i}\right)^2\,\leqslant\\
	\leqslant\,\left(\sum_i\,\sqrt{\lph i}\right)^2\,\lps 1\,\leqslant\,\left(\sum_i\,\sqrt{\lph i}\right)^2\,\spv\,\lps 1
\end{multline}
Let us denote $$\lsup \equiv \spv\,\lps 1.$$
Eqs.~(\ref{Trmatr}) and (\ref{TrBound}) yield:
\begin{multline}\label{mainbound}
    \Tr{\rho\lsc{AB}\,\Pi_V}\, \leqslant\, \lsup\,\sum_{\mu}\,q_{\mu}\,\left(\sum_i\,\sqrt{\lph i}\right)^2=\\
    =\,\lsup\,\sum_{\mu}\,q_{\mu}\,\left(\sum_i\,\lph i + 2\sum_{i<j}\,\sqrt{\lph i\lph j}\right)=\\
    =\,\lsup\left(1 + 2\sum_{\mu}q_{\mu}\sum_{i<j}\,\sqrt{\lph i\lph j}\right).
\end{multline}
From Eqs.~(\ref{pureneg}), (\ref{cren}), and (\ref{mainbound}) we obtain a lower bound on the extended convex-roof negativity of $\rho\lsc{AB}$:
\begin{equation}\label{CRENbound}
    N^{\mathrm{CREN}}(\rho\lsc{AB})\,\geqslant\,\max\left(\frac{\Tr{\rho\lsc{AB}\,\Pi_V} - \lsup}{2\lsup},\,0\right).
\end{equation}

Noticing that there are at most $m(m-1)/2$ terms in the sum $\sum_{i<j}\,\sqrt{\lph i\lph j}$ and using the Cauchy-Schwarz inequality again, we can write:
\begin{equation}\label{ineqaux}
    \sum_{i<j}\,\sqrt{\lph i\lph j}\,\leqslant\,\sqrt{\frac{m(m-1)}{2}}\sqrt{\sum_{i<j}\lph i\lph j}
\end{equation}
Using Eqs.~(\ref{schconc}), (\ref{concconv}), (\ref{mainbound}), and (\ref{ineqaux}), we obtain a lower bound on the concurrence of an $m\otimes n$ density operator $\rho\lsc{AB}$:
\begin{equation}\label{Concbound}
    C(\rho\lsc{AB})\,\geqslant\,\max\left(\sqrt{\frac{2}{m(m-1)}}\,\frac{\Tr{\rho\lsc{AB}\,\Pi_V} - \lsup}{\lsup},\,0\right).
\end{equation}
When we choose a one-dimensional projector $\Pi_V=\dyad{\Phi}$ on some pure entangled state $\ket{\Phi}$, $\lsup$ is equal to the square of the largest Schmidt coefficient  of $\ket{\Phi}$: $\lsup = \lphf 1$. As an example, if we choose $\Pi_V = \dyad{\Phi^+}$, a one-dimensional projector on the maximally entangled state $$\ket{\Phi^+}=\frac{1}{\sqrt{d}}\sum_{j=0}^{d-1}\ket{jj},$$
then, for the concurrence of a $d\otimes d$ state $\rho$ Eq.~(\ref{Concbound}) gives a bound that was mentioned in Ref.~\cite{ElToSie2015}:
\begin{equation}\label{prevres}
    C(\rho)\,\geqslant\,\max\left(\sqrt{\frac{2d}{d-1}}\left[\bra{\Phi^+}\rho\ket{\Phi^+} - \frac1d\right],\,0\right).
\end{equation}
In addition, when $\Pi_V=\dyad{\Phi}$, a more tight bound than the one  following directly from Eq.~(\ref{Concbound}) can be given for the concurrence: again, we start from the first inequality of Eq.~(\ref{TrBound}):
\begin{multline}\label{pc1}
    \absolutevalue{\Tr{\cvd{\Phi_{\mu}}\cv{\Phi}}}^2\,\leqslant\,\left(\sum_i\,\sqrt{\lphf i\lph i}\right)^2=\\
    =\,\sum_i\,\lph i\lphf i + 2\sum_{i<j}\,\sqrt{\lphf i\lphf j}\sqrt{\lph i\lph j}.
\end{multline}
The first term on the right-hand side is majorized by $\lphf 1$:
\begin{equation}\label{pc2}
\sum_i\,\lph i\lphf i\,\leqslant\,\lphf 1\sum_i\,\lph i = \lphf 1.
\end{equation}
For the second term we use the Cauchy-Swartz inequality and the expression~(\ref{schconc}) for the concurrence of a pure state:
\begin{multline}\label{pc3}
    \sum_{i<j}\,\sqrt{\lphf i\lphf j}\sqrt{\lph i\lph j}\,\leqslant\\
    \leqslant\sqrt{\sum_{i<j}\lphf i\lphf j}\sqrt{\sum_{i<j}\lph i\lph j} = \frac14\,C(\Phi)C(\Phi_{\mu}).
\end{multline}
Eq.~(\ref{Trmatr}) along with Eqs.~(\ref{pc1})-(\ref{pc3}) yields:
\begin{equation}\label{sb1}
    C(\rho\lsc{AB})\,\geqslant\,\max\left(\frac{2\left[\bra{\Phi}\rho\lsc{AB}\ket{\Phi} - \lphf 1\right]}{C(\Phi)},\,0\right),
\end{equation}
for any entangled pure state $\ket{\Phi}$.

For comparison, in this case Eq.~(\ref{Concbound}) would have given a bound:
\begin{equation}\label{sb2}
    C(\rho\lsc{AB})\,\geqslant\,\max\left(\sqrt{\frac{2}{m(m-1)}}\frac{\bra{\Phi}\rho\lsc{AB}\ket{\Phi} - \lphf 1}{\lphf 1},\,0\right).
\end{equation}
The bound in Eq.~(\ref{sb1}) has advantage over the one in Eq.~(\ref{sb2}) only when the state $\rho\lsc{AB}$ is entangled and its entanglement is detected by violation of the condition: $\bra{\Phi}\rho\lsc{AB}\ket{\Phi} \leqslant \lphf 1$.

Since CREN and the concurrence are invariant under local unitaries $U_A$ and $U_B$ , the inequalities in Eqs.~(\ref{CRENbound}) and (\ref{Concbound}) can be optimized over all such transformations:
\begin{subequations}
\begin{eqnarray}
	N^{\mathrm{CREN}}(\rho\lsc{AB})\,&\geqslant&\,\max\left(\frac{{\cal F}_V(\rho\lsc{AB}) - \lsup}{2\lsup},\,0\right),\label{opta}\\
	C(\rho\lsc{AB})\,&\geqslant&\,\max\left(\sqrt{\frac{2}{m(m-1)}}\,\frac{{\cal F}_V(\rho\lsc{AB}) - \lsup}{\lsup},\,0\right),\label{optb}
\end{eqnarray}
\end{subequations}
where
\begin{equation}
   {\cal F}_V(\rho\lsc{AB}) = \max_{U_A,\,U_B}\,\Tr{(U_A\otimes U_B)\rho\lsc{AB}(U_A\otimes U_B)^{\dagger}\Pi_V} 
\end{equation}
is the generalization of the \emph{fully entangled fraction} introduced in Ref.~\cite{EntFrac}.

\section{Applications}
\subsection{Bounds for some states}

In this subsection we use the derived inequalities to calculate the bounds for some well-known states.

For isotropic states,
\begin{equation}
\rho_F=\frac{1-F}{d^2-1}\left(I-\ket{\Phi^+}\bra{\Phi^+}\right)
+F\ket{\Phi^+}\bra{\Phi^+},\label{eq:isotropic}
\end{equation}
the lower bounds for CREN and the concurrence, obtained from Eqs.~(\ref{CRENbound}) and (\ref{Concbound}) with $\Pi_V$ equal to $\dyad{\Phi^+}$, are easily calculated:
\begin{subequations}\label{MIso}
\begin{eqnarray}
	N^{\mathrm{CREN}}(\rho_F)&\geqslant&\max\left(\frac{Fd - 1}2,\,0\right),\label{isoa}\\
	C(\rho_F) \,&\geqslant&\, \max\left(\sqrt{\frac{2d}{d-1}}\left(F-1/d\right),\,0\right)\label{isob}.
\end{eqnarray}
\end{subequations}
For Werner states,
\begin{eqnarray}
\varrho_W&=& \frac{2(1-W)}{d(d+1)}
\left(\sum_{k=0}^{d-1}\dyad{kk}+\sum_{i<j}\dyad*{\Psi_{ij}^+}\right)
\nonumber \\
&&+\frac{2W}{d(d-1)}\sum_{i<j}\dyad*{\Psi_{ij}^-},\label{eq:Werner}
\end{eqnarray}
with 
\begin{equation*}
    \ket{\Psi_{ij}^{\pm}}=\left(\ket{ij}\pm\ket{ji}\right)/{\sqrt{2}},
\end{equation*}
to  ``extract'' the parameter $W$, we need to consider the projector on the anti-symmetric subspace:
$$
\Pi_V = \sum_{i<j}\dyad{\Psi_{ij}^-}.
$$
In this case,
$$
    \Tr{\varrho_W\Pi_V} = W.
$$
As for the supremum $\lsup$, we state the following
\begin{lemma}
For vector states in the anti-symmetric subspace of the Hilbert space the supremum of the largest Schmidt coefficient squared, $\lsup$, is equal to $\frac 12$.
\end{lemma}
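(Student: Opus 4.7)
The plan is to translate the claim into a statement about the singular values of a complex antisymmetric matrix. Any $\ket{\Psi}\in\hil_A\otimes\hil_B$ with $\dim\hil_A=\dim\hil_B=d$ can be written as $\ket{\Psi}=\sum_{ij}c_{ij}\ket{ij}$, and the Schmidt coefficients of $\ket{\Psi}$ are precisely the singular values of the $d\times d$ coefficient matrix $c=\cv{\Psi}$, by Eq.~(\ref{sing}). Membership in the antisymmetric subspace is the condition $\ket{\Psi}=-\text{SWAP}\ket{\Psi}$, which in components reads $c_{ij}=-c_{ji}$, i.e., $c^{T}=-c$. So I am reduced to bounding $\sigma_1(c)^2$ over all unit-norm antisymmetric matrices $c$ (unit norm in the Frobenius sense, since $\braket{\Psi}{\Psi}=\Tr(c^{\dagger}c)=\sum_i\sigma_i(c)^2=1$).

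The key structural input is the Youla/Takagi decomposition: every complex antisymmetric matrix admits a factorization $c=U\Sigma U^{T}$ with $U$ unitary and $\Sigma$ block diagonal with $2\times2$ blocks $\bigl(\begin{smallmatrix}0&\sigma_k\\ -\sigma_k&0\end{smallmatrix}\bigr)$ (plus a zero row/column if $d$ is odd). Computing $\Sigma^{\dagger}\Sigma$ block by block, each block contributes $|\sigma_k|^2$ twice to the spectrum of $c^{\dagger}c$. Hence the nonzero singular values of $c$ occur with even multiplicity, and in non-increasing order they satisfy $\sigma_1(c)=\sigma_2(c)$, $\sigma_3(c)=\sigma_4(c)$, and so on. Equivalently, the Schmidt coefficients satisfy $\lps 1=\lps 2$, $\lps 3=\lps 4$, \ldots.

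Combining this pairing with normalization yields the bound: $1=\sum_i\lps i\geqslant \lps 1+\lps 2=2\,\lps 1$, so $\lps 1\leqslant 1/2$ for every normalized $\ket{\Psi}$ in the antisymmetric subspace. To see that the bound is sharp (and in particular achieved, so the supremum is a maximum), observe that any of the states $\ket{\Psi_{ij}^{-}}=(\ket{ij}-\ket{ji})/\sqrt{2}$ lies in the antisymmetric subspace, has Schmidt decomposition with two equal coefficients $1/\sqrt{2}$, and therefore realizes $\lps 1=1/2$. Taking the supremum over such $\ket{\Psi}$ gives $\lsup=1/2$.

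The only nontrivial ingredient is the pairing of singular values, which I would simply cite from the Takagi factorization theorem rather than prove from scratch; everything else is a direct calculation from the antisymmetry constraint and the definition of $\lsup=\spv\lps 1$.
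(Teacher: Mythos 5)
Your proof is correct and follows essentially the same route as the paper: identify antisymmetric states with antisymmetric coefficient matrices, invoke the fact that nonzero singular values of a complex antisymmetric matrix come in equal pairs, deduce $\lps 1\leqslant 1/2$ from normalization, and note that the bound is attained on $\ket{\Psi_{ij}^-}$. The only difference is cosmetic --- you derive the singular-value pairing from the Youla-type block factorization $c=U\Sigma U^{T}$, whereas the paper simply cites the same fact from a matrix-analysis reference.
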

\begin{proof}
	We can notice that to each linear combination of the vector states $\{\Psi_{ij}^-\}$ corresponds some anti-symmetric matrix~(defined in the same way as those  in Eq.~(\ref{PureDec})). According to Ref.~\cite{MAFull}, the nonzero singular values of an arbitrary anti-symmetric matrix $A$ are $\sigma_1(A),\,\sigma_1(A),\,\ldots,\,\sigma_{r/2}(A),\, \sigma_{r/2}(A)$, where $r = \mathrm{rank}\,A$ is always even; each nonzero singular value is repeated twice.
Consequently, for any vector state $\ket{\Psi}$ in the anti-symmetric subspace $\lps 1 \leqslant 1/2$; otherwise, there would be two coefficients, $\lps 1,\,\lps 2$, such that: $\lps 1 = \lps 2 > 1/2$, which is impossible since their sum cannot exceed 1. The upper bound, $\lsup = 1/2$, is achieved, for example, on the vector states $\{\Psi_{ij}^-\}$ themselves.
\end{proof}

Using this result, from Eqs.~(\ref{CRENbound}) and (\ref{Concbound}) we obtain:
\begin{subequations}\label{MWerner}
\begin{eqnarray}
	N^{\mathrm{CREN}}(\varrho_W)\,&\geqslant&\,\max\left(\frac{2W-1}{2},\,0\right),\label{werna}\\
	C(\varrho_W)\,&\geqslant&\,\max\left(\sqrt{\frac{2}{d(d-1)}}\,(2W-1),\,0\right).
\end{eqnarray}
\end{subequations}
The lower bounds of Eqs.~(\ref{isoa}), (\ref{isob}), and (\ref{werna}) turn out to be the exact CREN~\footnote{In Ref.~\cite{CREN2003} there is an additional factor, $2/(d-1)$, in the definition of the negativity of a pure state, which should be taken into account} and concurrence values for isotropic and Werner states obtained in Refs.~\cite{CREN2003} and \cite{CIso2003}.

Consider a density operator
\begin{equation}
    \rho = \frac{2F}{d(d-1)}\sum_{i<j}\dyad{\Psi_{ij}^-} + (1-F)\dyad{\Phi^+}.
\end{equation}
Combination of convexity property with Eq.~(\ref{CRENbound}) can give quite informative upper and lower bounds on the negativity. Applying Eq.~(\ref{CRENbound}) with $\Pi_V = \sum_{i<j}\dyad{\Psi_{ij}^-}$ and then with $\Pi_V = \dyad{\Phi^+}$, we obtain:
$$
  N^{\mathrm{CREN}}(\rho) \geqslant\max\left(F-1/2,\,\frac12\left[d(1-F)-1\right],\,0\right).
$$
By convexity of CREN,
\begin{multline*}
    N^{\mathrm{CREN}}(\rho)\leqslant\frac{2F}{d(d-1)}\,\sum_{i<j}\,N^{\mathrm{CREN}}
    \left(\dyad{\Psi_{ij}^-}\right)+\\
    + (1-F)\,N^{\mathrm{CREN}}\left(\dyad{\Phi^+}\right)=\,\frac12\left(d-1 - (d-2)F\right),
\end{multline*}
where we have used known expressions:
$$
N^{\mathrm{CREN}}
    \left(\dyad{\Psi_{ij}^-}\right) = \frac12,
$$
and 
$$
N^{\mathrm{CREN}}\left(\dyad{\Phi^+}\right) = \frac{d-1}{2}.
$$
Combining all bounds, we have:
\begin{multline}\label{BoundEx}
   \frac12\left(d-1 - (d-2)F\right)\,\geqslant\,N^{\mathrm{CREN}}(\rho)\,\geqslant\\
   \geqslant\,\left\{\begin{aligned}
   \frac12\left(d(1-F)-1\right),&&\:0\leqslant F\leqslant\frac{d}{d+2}\\
   F-1/2,&&\:\frac{d}{d+2}\leqslant F\leqslant 1.
   \end{aligned}\right.
\end{multline}
When $d>2$, the lower bound is always positive, and $\rho$ is entangled. When $d=2$, from the PPT criterion it follows that the state is separable at $F = 1/2$, and  the lower bound of Eq.~(\ref{BoundEx}) gives exact CREN values for three points: $F = 0,\,1/2,\,1$. Since there is no larger convex function with graph coming through these three points, the lower bound of Eq.~(\ref{BoundEx}) coincides with the exact CREN value in this case:
$$
N^{\mathrm{CREN}}(\rho) = \absolutevalue{F-1/2}\:\:\:\mbox{for d = 2}.
$$

\subsection{Separability criterion}

From Eq.~(\ref{CRENbound}) (or Eq.~(\ref{Concbound})) the following condition can be obtained:

\emph{If $\rho\lsc{AB}$ is separable then for a projector $\Pi_V$ on some subspace $V$ of $\hil_A\otimes\hil_B$}
\begin{equation}
    \Tr{\rho\lsc{AB}\,\Pi_V}\,\leqslant\,\lsup.\label{sepmain}
\end{equation}
This criterion, along with its remarkable consequences, was derived earlier in Ref.~\cite{SpecProp} with the use of the theory of entanglement witnesses.
One interesting consequence of this criterion:

\emph{If $\rho\lsc{AB}$ is separable then for any its ensemble decomposition $\rho_{\scr AB}=\sum_{\mu} q_{\mu} \dyad{\Phi_{\mu}}\lsc{AB}$ the following inequality holds:
\begin{equation}
	\lph 1\,\geqslant\,q_{\mu}.
\end{equation}}

\subsection{Robustness of entanglement}
Let us suppose that the entanglement of some $m\otimes n$ state $\rho\lsc {AB}$ is detected by violation of inequality~(\ref{sepmain}):
\begin{equation}\label{delt}
	\Tr{\rho\lsc{AB}\,\Pi_V}\,=\,\lsup + \delta,\qquad \delta > 0. 
\end{equation}
Since $\lsup$ is fixed for  the chosen subspace $V$, it is convenient to estimate how stable the  entanglement of the state is under various perturbations.

We begin with analyzing general Hermitian perturbations $\Delta$ satisfying 
\begin{equation}\label{stcond}
	\Tr{\Delta} = 0;\qquad  \rho\lsc {AB} + \Delta\,\geqslant\,0,
\end{equation}
i.~e., those $\Delta$ for which $\rho\lsc {AB} + \Delta$ remains  to be the state. Let $k$ -- dimension of the subspace $V$. Using the von Neumann inequality~(\ref{TrSing}) for Hermitian operators $\Pi_V$ and $\Delta$ and the fact that $\Pi_V$ has $k$ eigenvalues equal to one and the rest eigenvalues equal to zero, we obtain:
\begin{equation}\label{dbound}
	\absolutevalue{\Tr{\Pi_V\Delta}}\,\leqslant\,\norm{\Delta}_{(k)},
\end{equation}
where $\norm{A}_{(k)}$ -- the \emph{Ky Fan $k$-norm}~\cite{BhFan} of a matrix $A$, the sum of $k$ largest singular values of $A$:
\begin{equation}
	\norm{A}_{(k)} = \sum_{i=1}^k\,\sigma_i (A).
\end{equation}
When $A$ is Hermitian~(as with $\Delta$), the norm turns into the sum of $k$ largest absolute values of the eigenvalues of $A$.

The same reasoning applied to Eq.~(\ref{delt}) yields an upper bound on $\delta$ itself:
\begin{equation}
	\delta\,\leqslant\,\norm{\rho\lsc{AB}}_{(k)} - \lsup.
\end{equation}
Combining Eqs.~(\ref{delt}) and (\ref{dbound}), we obtain the following result:

\emph{For perturbations $\Delta$ satisfying, in addition to Eq.~(\ref{stcond}), the condition 
\begin{equation}
	\norm{\Delta}_{(k)}\,<\,\delta,
\end{equation}
the state $\rho\lsc {AB} + \Delta$ is entangled.} 

As a more physical example, we can consider mixing taking place between an entangled state $\rho$ and another state $\rho_M$. Following Refs.~\cite{rob},~\cite{robgen}, we define robustness of $\rho$ relative to $\rho_M$ as the minimal $p\in[0;\,1]$ for which $(1-p)\rho + p\rho_M$ is separable. We will investigate ``robustness from spectrum'' ---  robustness of $\rho$ relative to states $\rho_M$ with some given information about their spectrum.
The following property~\cite{MATr} is important in such an analysis:

\emph{Let $A$ and $B$ be Hermitian and have respective vectors of eigenvalues $\lambda(A) = [\lambda_i(A)]_{i=1}^n$ and $\lambda(B) = [\lambda_i(B)]_{i=1}^n$. Then
\begin{equation}
	\sum_{i=1}^n\,\lambda_i (A)^{\downarrow}\,\lambda_i (B)^{\uparrow}\,\leqslant\,\tr{AB}\,\leqslant\,\sum_{i=1}^n\,\lambda_i (A)^{\downarrow}\,\lambda_i (B)^{\downarrow},
\end{equation}
where $\uparrow$ and $\downarrow$ denote increasing and decreasing ordering of lambda's respectively.}

Using the above property, we obtain:
\begin{equation}
	\sum_{i=1}^n\,\lambda_i (\rho_M)^{\uparrow}\,\leqslant\,\tr{\Pi_V\rho_M}\,\leqslant\,\sum_{i=1}^n\,\lambda_i (\rho_M)^{\downarrow}.
\end{equation}
Due to the fact that $\tr{\rho_M} = 1$ the last expression can be rewritten in terms of Ky Fan norms:
\begin{equation}\label{Fbounds}
	1-\norm{\rho_M}_{(mn-k)}\,\leqslant\,\tr{\Pi_V\rho_M}\,\leqslant\,\norm{\rho_M}_{(k)}.
\end{equation}
Let us assume that equality in Eq.~(\ref{delt}) holds for $\rho$. Making use of Eq.~(\ref{Fbounds}), we obtain:
\begin{equation}
	\tr{\Pi_V \left((1-p)\rho + p\rho_M\right)} - \lsup\,\geqslant\,\delta + p(1 - \norm{\rho_M}_{(mn-k)} - \lsup - \delta).
\end{equation}
From the last expression it follows that \emph{for $p$ satisfying
\begin{equation}\label{FinB}
	p\,<\,\frac {\delta}{\lsup + \delta + \norm{\rho_M}_{(mn-k)} -1}
\end{equation}
the state $(1-p)\rho + p\rho_M$ is entangled.}

Eq.~(\ref{FinB}) gives a lower bound on robustness of entanglement of $\rho$ relative to states $\rho_M$ with the given Ky Fan norm $\norm{\rho_M}_{(mn-k)}$.

In a particular case when subspace $V$ coincides with the span of eigenvectors of $\rho$  corresponding to nonzero eigenvalues and $\lsup<1$, we have: $$\lsup + \delta = \tr{\Pi_V\rho} = 1; \qquad \delta = 1 - \lsup,$$
and Eq.~(\ref{FinB}) transforms into
\begin{equation}\label{redp}
	p\,<\,(1-\lsup)\,\norm{\rho_M}_{(mn-k)}^{-1}.
\end{equation}
\emph{Example 1}. If we choose the Bell state, $\rho = \dyad*{\Phi^+}$, then $\lsup = 1/d$, $k = 1$~($V$ is a one dimensional subspace spanned by $\ket{\Phi^+}$), and from Eq.~(\ref{redp}) we obtain that entanglement is preserved under mixing when $$p < \frac {d-1}{d\,\norm{\rho_M}_{(d^2-1)}} = \frac {d-1}{d(1-\lambda_{\min}(\rho_M))}.$$
One may try to increase the minimal eigenvalue of the mixing noise as much as possible to achieve greater robustness of entanglement in this case.
\medskip

\noindent
\emph{Example 2}. For a state $$\rho = \frac{2}{d(d-1)}\sum_{i<j}\dyad*{\Psi_{ij}^-},$$ which is in fact a Werner state of Eq.~(\ref{eq:Werner}) corresponding to $W=1$, we have: $\lsup = 1/2$, $k=d(d-1)/2$, and the bound for probability of mixing $p$ is:
$$
p\,<\,\frac12 \norm{\rho_M}_{(\scriptscriptstyle d(d+1)/2)}^{-1}.
$$
\section{Conclusions}

Our main results are the inequalities in Eqs.~(\ref{CRENbound}), (\ref{Concbound}), (\ref{sb1}) and their optimized over local unitaries versions - Eqs.~(\ref{opta}) and (\ref{optb}). Some of their consequences generalize previously known results: Eq.~(\ref{sb1}) is an extension of the estimate in Eq.~(\ref{prevres}) known from the Schmidt number witness~\cite{SchWit}.

In general, the supremum of the largest Schmidt coefficient squared, $\lsup$, is hard to evaluate  for an arbitrary many-dimensional subspace $V$: direct calculations of singular values and a maximization procedure over a large number of parameters are needed. In some special cases it can be obtained from general results of matrix theory -   we did this for the case of  the anti-symmetric subspace. We applied this result to  Werner states, and  our bound on the convex-roof extended negativity gave the best possible result -  the exact value of this measure. It would be interesting to find other examples of subspaces with  relatively low values of $\lsup$.

The von Neumann's trace inequality played a crucial role in derivation of our results: it allowed us to relate the Schmidt coefficients with the expectation values of specific operators. An interesting direction of further research would be to analyze other trace inequalities and properties of singular values which could potentially give some information about entanglement measures.

\section*{Acknowledgments}

The author is grateful to G. Sarbicki for kind explaining the details of his work. The author would like to thank A. Khalapyan for comments on an earlier version of the manuscript. This work was supported by Lomonosov Moscow State University.

\bibliography{bounds}

\end{document}